\documentclass[ conference,a4paper]{IEEEtran}
\IEEEoverridecommandlockouts                              

\IEEEoverridecommandlockouts                              
\overrideIEEEmargins

\usepackage{graphicx, subfigure}
\usepackage{enumerate}
\usepackage{amsmath}
\usepackage{amssymb}
\usepackage{amsthm}
\usepackage[utf8]{inputenc}
\usepackage{subfigure}
\usepackage{sgame}
\usepackage{color}
\usepackage{stmaryrd}

\newtheorem{proposition}{Proposition}

\newtheorem{remark}{Remark}

\newtheorem{defi}{Definition}

\def\be{\begin{equation}}
\def\ee{\end{equation}}

\def\bearn{\begin{eqnarray*}}
\def\eearn{\end{eqnarray*}}
\def\bear{\begin{eqnarray}}
\def\eear{\end{eqnarray}}
\def\barr{\begin{array}}
\def\earr{\end{array}}
\usepackage{subfigure}%
\usepackage{epstopdf}


\pretolerance=9000

 \usepackage{mathtools}
\newcommand{\defeq}{\vcentcolon=}

\def\Prob{I\kern-0.30em P}

\begin{document}

\title{ Altruism in groups:  an evolutionary games approach$^{\diamond}$ \thanks{$^\diamond$ This work has been partially supported by the European Commission within the framework of the CONGAS project FP7-ICT-2001-8-317672}  }

\author{
Ilaria Brunetti$^{\star \dagger}\thanks{$^\star$CERI/LIA, University of Avignon, 339, chemin des Meinajaries,Avignon, France}$\thanks{$^\dagger$INRIA, B.P 93, 06902 Sophia Antipollis Cedex, FRANCE}, Rachid El-Azouzi$^\star$  and Eitan Altman$^\dagger$ 
}
\date{}

\maketitle

\begin{abstract}
We revisit in this paper the relation between evolution of species and the mathematical tool of evolutionary games, which has been used to model and predict it. We indicate known shortcoming of this model that restricts the capacity of evolutionary games to model groups of individuals that share a common gene or a common fitness function. In this paper we provide a new concept to remedy this shortcoming in the standard evolutionary games in order to cover this kind of behavior. Further, we explore the relationship between this new concept and Nash equilibrium or ESS. We indicate through the study of some example in the biology as Hawk-Dove game, Stag-Hunt Game and Prisoner’s Dilemma, that when taking into account a utility that is common to a group of individuals, the equilibrium structure may change dramatically. We also study the multiple access control in slotted Aloha based wireless networks. We analyze the impact of the altruism behavior on the performance at the equilibrium.
\end{abstract}

\section{Introduction}
Evolutionary games become a central tool for predicting and even design evolution in many fields. Its origins come from biology where it was  introduced by \cite{MaynSm-Price} to model conflicts among animals.  It differs from classical game theory by its focusing on the evolution dynamics of the fraction of members of the population that use a given strategy, and in the notion of Evolutionary Stable Strategy (ESS, \cite{MaynSm-Price}) which includes robustness against a deviation of a whole (possibly small) fraction of the population who may wish to deviate. This is in  contrast with the standard Nash equilibrium concept that only incorporates robustness against deviation of a single user.  It became perhaps the most important mathematical tool  for describing and modeling evolution since Darwin. Indeed, on the importance of the ESS for understanding the evolution of species, Dawkins writes in his book "The Selfish Gene" \cite{gene}: "we may come to look back on the invention of the ESS concept as one of the most important advances in evolutionary
theory since Darwin." He further specifies: "Maynard Smith's concept of the ESS will enable us, for the first time, to
see clearly how a collection of independent selfish entities can come to resemble a single organised whole.  In this paper, we identify inherent restrictions on the modeling capacity of classical evolutionary games apply. Recently, the  evolutionary game theory has become of increased interest to social scientists \cite{FH03}.  In computer science, evolutionary game theory is appearing, some  examples of applications  can be found  in  multiple access protocols \cite{tembineWiopt}, multihoming\cite{SAK} and resources  competition in the Internet \cite{Zheng}.

The starting point of this theory is a situation of a very large number of  local pairwise interactions between pairs of individuals that are 
randomly matched. In classical evolutionary games (EG), each individual
represents a selfish player in a non-cooperative
game that it plays with its randomly
matched adversary, and in which each player seeks to maximize its utility.
The originality of EG is in postulating that 
this utility is the Darwinian fitness.
The fitness should be understood as the relative rate at which the behavior
used by the individual will increase.
For a given share of behaviors in the population,
a behavior of an individual with a higher fitness would thus result
in a higher rate of its reproduction.  We make the observation that 
since classical EG associates with an individual both the interactions
with other individuals as well as the fitness, then it is restricted to
describing populations in which the individual is the one that is 
responsible for the reproduction and where the choice of
its own strategies is completely selfish.  In  biology,  in some species like bees
or ants, the one who interacts is not the one who reproduces. This implies that the Darwinian
fitness is related to the entire swarm and not to a single bee and thus, standard EG models excludes these species  in which the single individual
which reproduces is not the one that interacts with other 
individuals.  Furthermore, in many species, we find altruistic behaviors, which may hurt the individual adopting it, favouring instead the group he belongs to. Altruistic behaviors are typical of parents toward their children: they may incubate
them, feed them or protect them from predator’s at a high cost for themselves. Another example
can be found in flock of birds: when a bird sees a predator it gives an alarm call to warn the rest
of the flock, attracting the predators attention to itself. Also the stinging behavior of bees is an
altruistic one: it serves to protect the hive, but its lethal for the bee which strives. In human behavior,
many phenomena where individuals do care about other’s benefits in their groups or about
their intentions can be observed in the real word.  It must be admitted that some phenomena require an explanation in terms of genes which pursue their own interest to the disadvantage of the individual. Hence the assumption of selfishness becomes
inconsistent with the real behavior of individual in a population.

Founders of classical EG seem to have been well aware of this problem.
Indeed, Vincent writes in \cite{vincent}
"Ants seem to completely subordinate any
individual objectives for the good of the group. On the other hand, the social
foraging of hyenas demonstrates individual agendas within a tight-knit social
group (Hofer and East, 2003). As evolutionary games, one would ascribe 
strategies and payoffs to the ant colony, while ascribing strategies 
and payoffs to the individual hyenas of a pack."
In the case of ants, the proposed solution is thus to model the
ant colony as a player. Within the CEG paradigm, this would
mean that we have to consider interactions between ant colonies.
This however does not allow us anymore to
model behavior at the level of the individual. 

In this work  we present a new model for evolutionary games, in which the concept of the agent as a single individual is substituted by that of the agent as a whole group of individuals.  This new concept, named Group Equilibrium Stable Strategy (GESS),  allow   to model competition between individuals
in a population in which the whole group shares a common utility. Even if we still consider pairwise interactions among individuals, our perspective is completely different: we suppose that individuals are simple actors of the game and that the utility to be maximized is the one of their group.    Our study of evolutionary games under the altruism inside each group, is built around the ESS. We begin by defining the GESS, deriving it in several ways and exploring its major characteristics.  The main focus of this paper is to study how this new concept changes the profile of population  and to 
explore  the relationship between GESS and Nash equilibrium or ESS.  We characterize through the study of many GESS  and we show how the evolution and the equilibrium are influenced by the groups' size as well as  by their immediate payoff.     We also provide some primary results through an example on multiple access games,  in which any local interaction does not lead to same payoff depending on the type of individuals that are competing, and not only the strategy used.  In  such application, we evaluate the impact of altruism behavior on the performance of the system. 

The paper is structured as follows. We first provide in the next section the needed background on evolutionary games. In the section \ref{NNCEG} we then study the new natural concept GESS and the relationship between  GESS  and ESS or Nash equilibrium. The characterization of the  GESS is studied in section \ref{ANGG}.   Section \ref{Examples} provides some numerical illustration through some famous examples in evolutionary games.  In section \ref{MAC} we study  the multiple access control in slotted Aloha under altruism behavior.  The paper closes with a summary in section \ref{conc}

%

\section{Classical Evolutionary Games and ESS}
\label{CEG}
\indent 
We consider an infinite population of players and we assume that each member of the population has the same set of available pure strategies ${\cal{K}}=\{1, 2,..,m\}$.  We suppose that each individual is repeatedly paired off with an other individual randomly selected within the population.  A player may use a mixed strategy $\mathbf{p}\in \Delta({\cal{K}})$ where  $\Delta({\cal{K}})= \{\mathbf{p}  \in \mathbb{R}^m_{+} | \sum_{i\in \cal{K}} p_i =1\}$. Here $\mathbf{p}$ is a probability measure over the set of actions ${\cal{K}}$.  This is the case where an individual has the capacity to produce a variety in  behaviours.  Alternatively,  the mixed strategy $\mathbf{p}$ can be interpreted as the vector of densities of individuals adopting a certain pure strategy, where $p_i$ is the fraction of the population using strategy $i\in{\cal K}$. However the original formulation of evolutionary game theory were not required to make distinction between population-level and individual-level variability for infinite population \cite{MaynSm-Price}.


Let now focus on the case of monomorphic populations  in which each individual uses a mixed strategy.  We define by $J(\mathbf{p},\mathbf{q})$ the expected payoff for a tagged individual if it
uses a mixed action $\mathbf{p}$ when meeting
another individual who adopts the mixed action $\mathbf{q}$. This payoff
is called "fitness" and actions with larger fitness are
expected to propagate faster in a population. If we define a
payoff matrix $A$ and consider $\mathbf{p}$ and $\mathbf{q}$ to be column vectors, 
then $J(\mathbf{p},\mathbf{q})=\mathbf{p}'A\mathbf{q}$ and the payoff function $J$
is indeed linear in $\mathbf{p}$ and $\mathbf{q}$. A mixed action $\mathbf{q}$ is called a
Nash equilibrium if
\begin{equation}
\forall  \mathbf{p}\in \Delta(\mathcal{K}), \quad J(\mathbf{q},\mathbf{q}) \geq J(\mathbf{p},\mathbf{q}) \label{ess4}
\end{equation}
In evolutionary games the most important concept of equilibrium is the ESS, which was introduced by \cite{MaynSm-Price} as a strategy that, if adopted by most members of a population, it is not invadable by mutant strategies in its suitably small neighbourhood. More precisely, we suppose that  the whole population uses a strategy  $\mathbf{q}$ and that a small fraction $\epsilon$ of individuals (\textit{mutants}) adopts another strategy $\mathbf{p}$. Evolutionary forces are expected to select $\mathbf{q}$ against $\mathbf{p}$  if
  \begin{equation}
\label{ess0}
J(\mathbf{q},\epsilon \mathbf{p}+ (1 - \epsilon ) \mathbf{q} ) > J(\mathbf{p},\epsilon
\mathbf{p}+ (1 - \epsilon ) \mathbf{q} )
\end{equation}
The definition of ESS is thus related to a robustness property
against deviations by a whole (possibly small) fraction of the population. 
This is an important difference that distinguishes the equilibrium
in populations as seen by biologists and the standard Nash equilibrium
often used in economic context, in which robustness is defined against
the possible deviation of a single user. Why do we need the
stronger type of robustness? Since we deal with large populations,
it is likely to expect that from time to time, some group of individuals
may deviate. Thus robustness against deviations by a single user
is not sufficient to ensure that deviations will not develop and end
up being used by a growing portion of the  population.  By defining the ESS  through the following equivalent definition {\rm \cite[Proposition 2.1]{weibull} or \cite[Theorem
6.4.1, page 63]{HS98}}, it's possible to establish the relationship between ESS and Nash Equilibrium (NE). Strategy $\mathbf{q}$ is an ESS if it satisfies the two conditions:
\begin{itemize} 
\item Nash equilibrium condition:
\begin{equation}\label{ess6}
J(\mathbf{q},\mathbf{q}) \geq J(\mathbf{p},\mathbf{q})\qquad \forall\mathbf{p}\in\mathcal{K}.
\end{equation}
\item Stability condition: \\
\begin{equation}\label{ess7}
J(\mathbf{p},\mathbf{q}) =J(\mathbf{q},\mathbf{q}) \Rightarrow
J(\mathbf{p},\mathbf{p})<J(\mathbf{q},\mathbf{p}).\quad \forall\mathbf{p}\neq\mathbf{q}
\end{equation}
\end{itemize} 
The first condition  (\ref{ess6}) is the condition for a Nash equilibrium. In fact, if condition (\ref{ess6}) is satisfied, then the fraction
of mutations in the population will tend to decrease (as it has a
lower fitness, meaning a lower growth rate). Thus the action $q$
is then immune to mutations.  If it does not but if still the
condition (\ref{ess7}) holds, then a population using $q$ is
''weakly'' immune against mutants using $p$. Indeed, if the
mutant's population grows, then we shall frequently have
individuals with action $q$ competing with mutants. In such
cases, the condition $ J(\mathbf{p},\mathbf{p}) <J(\mathbf{q},\mathbf{p})$ ensures that the growth
rate of the original population exceeds that of the mutants. Then an ESS is a refinement of the Nash equilibrium.

\section{New natural concept on evolutionary games }
\label{NNCEG}
In this section we present a new concept for evolutionary games, in which the idea of the player as a single individual is substituted by that of a player as a whole group of individuals. The interactions are among individuals but the objective function, which is maximized, is that of the group they belong to.  We assume that the population  is composed of $N$ groups, $G_i$, $i=1,2,..,N$, where  the normalized size of $G_i$ is noted by $\alpha_i$ with $\sum_{j=1}^N \alpha_i=1$.  

For clarity of presentation, we restrict our analysis to pairwise interactions, where each individual can meet a member of its own group or of a different one. Individuals dispose of a finite set of actions: $\mathcal{K} = \{a_1, a_2, ..,a_M\}$.  Let $p_{ik}$ be the probability that an individual in the group $G_i$ choses an action  $a_k\in \mathcal{K}$; we associate to each group $i$  the vector of probabilities ${\bf p_i}=(p_{i1}, p_{i2},..,p_{iM})$ where $\sum_{l=1}^M p_{il}=1$.   By assuming that each individual can interact with another individual with equal probability, then the expected  utility of a group (player)  $i$ is:
\begin{equation}\label{utiliti}
U_i(\mathbf{p}_i, \mathbf{p}_{-i}) = \sum_{j=1}^N \alpha_j J(\mathbf{p}_i, \mathbf{p}_j),
\end{equation} 
\noindent where $\mathbf{p}_{-i}$ is the profile strategy of other groups and $J(\mathbf{p}_i, \mathbf{p}_j)$ is the immediate expected reward of an individual player adopting strategy $\mathbf{p}_i$ against an opponent playing  $\mathbf{p}_j$.   

\subsection{Group Equilibrium Stable Strategy }

The definition of GESS is related to the robustness property against deviations inside each group. 
There are two possible equivalent interpretations of an $\epsilon -$ deviation in this context:

\begin{enumerate}
\item A small deviation in the strategy by all members of a group. If the group $G_i$ plays according to strategy $\mathbf{q}_i$, the $\epsilon-$ deviation, where $\epsilon\in (0,1)$, consists in a shift to the group's strategy ${\bf \bar p}_i=\epsilon \mathbf{p}_i + (1-\epsilon  \mathbf{q}_i)$; 
\item The second is a deviation (possibly) large of a small number of individuals in a group $G_i$, that means that a fraction $\epsilon$ of individuals in $G_i$  plays a different strategy $\mathbf{p}_i $.
\end{enumerate}

After an $\epsilon -$deviation under both interpretations the profile of the whole population becomes $\alpha_i \epsilon \mathbf{p}_i +\alpha_i (1-\epsilon)  \mathbf{q}_i + \sum_{j\not= i} \alpha_j \mathbf{q}_j$. Then the average payoff of group $G_i$ after mutation  is given by:
\small
\begin{equation}\label{mutant}
\begin{split}
 &U_i(\bar{\mathbf{p}}_i, \mathbf{q}_{-i})=\sum_{j=1}^N \alpha_j J(\bar{\mathbf{p}}_i, \mathbf{p}_j)\\
&=U_i(\mathbf{q}_i,\mathbf{q}_{-i})+\epsilon^2\alpha_i \Omega(\mathbf{p}_i,\mathbf{q}_i) +\epsilon \Big(\alpha_i(J(\mathbf{p}_i,\mathbf{q}_i)\Big.\\&\Big. +J(\mathbf{q}_i,\mathbf{p}_i)-2J(\mathbf{q}_i,\mathbf{q}_i))+\sum_{j\not=i} (J(\mathbf{p}_i,\mathbf{q}_{j})-J(\mathbf{q}_i,\mathbf{q}_j)\Big)
\end{split}
\end{equation}
\normalsize

\noindent where \small $ \Omega(\mathbf{p}_i,\mathbf{q}_i) :=J(\mathbf{p}_i,\mathbf{p}_i) - J(\mathbf{p}_i,\mathbf{q}_i)- J(\mathbf{q}_i,\mathbf{p}_i)) +J(\mathbf{q}_i,\mathbf{q}_i)$. \normalsize

\begin{defi}
\label{def2}
A strategy $\mathbf{q}=(\mathbf{q}_1, \mathbf{q}_2,..,\mathbf{q}_N)$ is a GESS if $\forall i\in\{1,\ldots,N\}$,  $\forall \mathbf{p}_i\not=\mathbf{q}_i$, there exists some $\epsilon_{\mathbf{p}_i}\in(0,1)$, which may depend on $\mathbf{p}_i$, such that for all $\epsilon\in(0,\epsilon_{\mathbf{p}_i})$ 
\begin{equation}
\label{ess1} 
 U_i(\bar{\mathbf{p}}_i,{\bf  q}_{-i})  <  U_i(\mathbf{q}_i,\mathbf{q}_{-i}), 
\end{equation}
where $\bar{\mathbf{p}}_i=\epsilon \mathbf{p}_i + (1-\epsilon) \mathbf{q}_i$.
\end{defi}

Hence from equation (\ref{ess1}), strategy $\mathbf{q}$ is a GESS if the two following conditions hold:
\begin{itemize}
\item $\forall \mathbf{p}_i\in [0,1]^M$ 
\begin{equation}
 F_i(\mathbf{p}_i,\mathbf{q})\defeq \alpha_i \Omega(\mathbf{p}_i,\mathbf{q}_i)-U_i(\mathbf{p}_i,\mathbf{q}_{-i})+U_i(\mathbf{q}_i,\mathbf{q}_{-i})\geq 0,
 \label{cond1-GESS}
 \end{equation} 
\item $\exists \mathbf{p}_i\not=\mathbf{q}_i$ such that: 
\begin{equation} 
\textrm{If}\quad F_i(\mathbf{p}_i,\mathbf{q})=0\Rightarrow 
\Omega(\mathbf{p}_i,\mathbf{q}_i)<0
 \label{cond2-GESS}
\end{equation}  
\end{itemize}

\begin{remark}
\label{rem-GESS-Nash}
The condition (\ref{cond2-GESS}) can be rewritten as 
$$U_i(\mathbf{q}_i,\mathbf{q}_{-i})>U_i(\mathbf{p}_i,\mathbf{q}_{-i})$$
which is exactly the definition  of the strict Nash equilibrium of the game composed by $N$ groups  in which each group maximises its own utility. 
\end{remark}

\subsection{GESS and standard ESS}
Here we analyse the relationship  between the standard $ESS$ and our new concept  $GESS$.
\begin{proposition}
Consider  games whose immedaite expected reward is symmetric, i.e. $J({\bf p},{\bf q})=J({\bf q},{\bf p})$.  Then any ESS is a GESS.
\end{proposition}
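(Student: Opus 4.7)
My plan is to show that if $\mathbf{q}^{\star}$ is an ESS of the underlying symmetric one-population game, then the uniform group profile $\mathbf{q}=(\mathbf{q}^{\star},\ldots,\mathbf{q}^{\star})$ satisfies Definition~\ref{def2}. The starting point is the expansion~(\ref{mutant}) of $U_i(\bar{\mathbf{p}}_i,\mathbf{q}_{-i})-U_i(\mathbf{q}_i,\mathbf{q}_{-i})$ as a polynomial of degree two in $\epsilon$, with quadratic coefficient $\alpha_i\Omega(\mathbf{p}_i,\mathbf{q}^{\star})$ and a linear bracket built from $J(\mathbf{p}_i,\mathbf{q}^{\star})$, $J(\mathbf{q}^{\star},\mathbf{p}_i)$ and the cross-group differences. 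I would fix an arbitrary group $i$ and mutant $\mathbf{p}_i\neq\mathbf{q}^{\star}$, and check that this polynomial is strictly negative for all sufficiently small $\epsilon>0$.

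The first concrete step is to simplify the linear coefficient after plugging in $\mathbf{q}_j=\mathbf{q}^{\star}$ for every $j$ and invoking symmetry $J(\mathbf{q}^{\star},\mathbf{p}_i)=J(\mathbf{p}_i,\mathbf{q}^{\star})$. The within-group term $\alpha_i[J(\mathbf{p}_i,\mathbf{q}^{\star})+J(\mathbf{q}^{\star},\mathbf{p}_i)-2J(\mathbf{q}^{\star},\mathbf{q}^{\star})]$ collapses to $2\alpha_i[J(\mathbf{p}_i,\mathbf{q}^{\star})-J(\mathbf{q}^{\star},\mathbf{q}^{\star})]$, and the cross-group sum contributes $(1-\alpha_i)[J(\mathbf{p}_i,\mathbf{q}^{\star})-J(\mathbf{q}^{\star},\mathbf{q}^{\star})]$, so the whole linear coefficient reduces to $(1+\alpha_i)[J(\mathbf{p}_i,\mathbf{q}^{\star})-J(\mathbf{q}^{\star},\mathbf{q}^{\star})]$, which is non-positive by the Nash condition~(\ref{ess6}) satisfied by any ESS.

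Next I would split into two cases. If $J(\mathbf{p}_i,\mathbf{q}^{\star})<J(\mathbf{q}^{\star},\mathbf{q}^{\star})$ the linear coefficient is strictly negative, dominates the quadratic term for small enough $\epsilon$, and Definition~\ref{def2} holds. If instead $J(\mathbf{p}_i,\mathbf{q}^{\star})=J(\mathbf{q}^{\star},\mathbf{q}^{\star})$ the linear term vanishes, and I would use the ESS stability condition~(\ref{ess7}) to obtain $J(\mathbf{p}_i,\mathbf{p}_i)<J(\mathbf{q}^{\star},\mathbf{p}_i)$. Symmetry yields $J(\mathbf{q}^{\star},\mathbf{p}_i)=J(\mathbf{p}_i,\mathbf{q}^{\star})=J(\mathbf{q}^{\star},\mathbf{q}^{\star})$, so $\Omega(\mathbf{p}_i,\mathbf{q}^{\star})=J(\mathbf{p}_i,\mathbf{p}_i)-2J(\mathbf{p}_i,\mathbf{q}^{\star})+J(\mathbf{q}^{\star},\mathbf{q}^{\star})=J(\mathbf{p}_i,\mathbf{p}_i)-J(\mathbf{q}^{\star},\mathbf{q}^{\star})<0$, and the quadratic term alone makes the polynomial strictly negative for every $\epsilon\in(0,1)$.

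The main obstacle, and precisely the reason for the symmetry hypothesis, is isolated in the second case. Without symmetry, the linear coefficient would retain an extra term $\alpha_i[J(\mathbf{q}^{\star},\mathbf{p}_i)-J(\mathbf{q}^{\star},\mathbf{q}^{\star})]$ of indeterminate sign that need not vanish when $J(\mathbf{p}_i,\mathbf{q}^{\star})=J(\mathbf{q}^{\star},\mathbf{q}^{\star})$, and $\Omega(\mathbf{p}_i,\mathbf{q}^{\star})$ would no longer collapse into a single Maynard-Smith comparison, so the ESS stability condition could not be invoked directly. Verifying that under symmetry both the linear bracket and $\Omega$ telescope correctly is therefore the step I expect to require the most care, but it reduces to bookkeeping rather than any deeper idea.
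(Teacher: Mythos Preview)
Your proposal is correct and follows essentially the same route as the paper: both arguments plug the uniform profile $\mathbf{q}=(\mathbf{q}^\star,\ldots,\mathbf{q}^\star)$ into the expansion~(\ref{mutant}), use symmetry to collapse the linear coefficient to $(1+\alpha_i)[J(\mathbf{p}_i,\mathbf{q}^\star)-J(\mathbf{q}^\star,\mathbf{q}^\star)]$ and invoke the Nash condition~(\ref{ess6}), and then in the tie case use symmetry again together with the stability condition~(\ref{ess7}) to force $\Omega(\mathbf{p}_i,\mathbf{q}^\star)<0$. The only cosmetic difference is that the paper phrases the two steps via the equivalent conditions~(\ref{cond1-GESS})--(\ref{cond2-GESS}) on $F_i$ rather than directly via Definition~\ref{def2}.
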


\begin{proof}
Let $\mathbf{q}=(q,..,q)$ be an ESS.  Combining the symmetry of the payoff function and equation (\ref{cond1-GESS}), we get:
\small 
\begin{equation*}
\begin{split}
& F_i(\mathbf{p}_i,\mathbf{q})=-\Big(\alpha_i(J(\mathbf{p}_i,{q})+J(q,\mathbf{p}_i)-2J(q,q)\Big.\\&\Big.+\sum_{j\not=i} \alpha_j(J(\mathbf{p}_i,q)-J(q,q)\Big)\\&=-2\alpha_i(J(\mathbf{p}_i,q)-J(q,q))-\sum_{j\not=i}\alpha_j (J(\mathbf{p}_i,q)-J(q,q))\\&=- (1+\alpha_i)(J(\mathbf{p}_i,q)-J(q,q))\geq 0
\end{split}
\end{equation*}
\normalsize
\noindent where the second equality  follows from the symmetry of the payoff function $J$ and the last inequality follows form the fact that $\mathbf{q}$ is an ESS and satisfies (\ref{ess6}). This implies that $\mathbf{q}$ satisfies the first condition of GESS (\ref{cond1-GESS}).    Now assume that  $F_i(\mathbf{p}_i,q) =0$  for some $\mathbf{p}_i\not=\mathbf{q}$, previous equations imply that $J(\mathbf{p}_i,\mathbf{q})=J(\mathbf{q},\mathbf{q})$.  Thus the  second condition (\ref{cond2-GESS}) becomes \small
$\Omega(\mathbf{p}_i,\mathbf{q}_i)=J(\mathbf{p}_i,\mathbf{q})-J(\mathbf{q},\mathbf{q})<0$ \normalsize
which coincide with the second condition of ESS (\ref{ess7}). This completes the proof.
\end{proof}

\subsection{Nash equilibrium and GESS}
In the classical evolutionary games, the ESS is a refinement of a Nash equilibrium and we can see that all ESSs are Nash equilibria but not all Nash equilibria are ESSs.  In order to characterize this relationship in our context, let us define  the game between groups: There are  $N$ players in which each player has a finite set of pure strategies ${\cal{K}}=\{1, 2,..,m\}$.   We define by 
$U_i(\mathbf{q}_i,\mathbf{q}_{-i})$ the utility of player $i$ when using mixed strategy $\mathbf{q}_i$ against a population of players using $\mathbf{q}_{-i}=(\mathbf{q}_1,\ldots,\mathbf{q}_{i-1},\mathbf{q}_{i+1}, \ldots,\mathbf{q}_N)$.

\begin{defi}
\label{def22}
A strategy $\mathbf{q}=(\mathbf{q}_1, \mathbf{q}_2,..,\mathbf{q}_N)$ is a Nash Equilibrium   if $\forall i\in\{1,\ldots,N\}$  
\begin{equation}
 U_i( \mathbf{q}_i,{\bf  q}_{-i})  \geq U_i(\mathbf{p}_i,\mathbf{q}_{-i})
 \label{NE}
\end{equation}
for every other mixed strategy $ \mathbf{p}_i\not= \mathbf{q}_i$ . If it holds for strict inequality, then $\mathbf{q}$ is a strict Nash equilibrium. 
\end{defi}

From the definition of the strict Nash equilibrium,  it is easy to  show that any strict Nash equilibrium is a GESS defined in equation  (\ref{ess1}). But in our context, we address several questions on the relationship  between the GESS, ESS  and the Nash equilibrium  defined in (\ref{NE}).  For simplicity of presentation, we restrict to the case of two-strategies games. Before studying them, we introduce here some definitions that are needed in the sequel.

\begin{defi}
\begin{itemize}
\item A {\bf \it fully mixed strategy} ${\bf q}$  is a strategy such that all actions of each group have to receive a positive probability, i.e., $0<q_{ij}<1$ $\forall (i, j)\in {\cal I}\times {\cal K}$.
\item A mixer (pure) group $i$  is the group that uses a mixed (pure) strategy $0<q_i<1$ (resp. $q_i\in\{0,1\}$).  
\item An equilibrium with mixed and non mixed strategies is an equilibrium in which there is at least one pure group and a mixer group.
\end{itemize}
\end{defi}

\section{Analysis of $N$-groups games with two strategies}\label{ANGG}
We will discuss here $N$-groups games with two strategies. The two possible pure strategies are  $A$ and $B$ and the pairwise interactions payoff matrix  is given by:

$$
P=\bordermatrix{~ & A & B \cr
                  A & a & b \cr
                  B & c & d \cr}, \;
$$
where $P_{ij}$, $i,j=A,\; B$ is the payoff of the first (row) individual if it plays strategy $i$ against the second (column) individual playing strategy $j$.  We assume that both individuals are the same and hence payoffs of the column player are given by the transposed of $P$.  According to the definition of GESS, ${\mathbf q}$ is a GESS if it satisfies the conditions (\ref{cond1-GESS})-(\ref{cond2-GESS}), which can be rewritten as:

\small
\begin{itemize}
\item $\forall \mathbf{p}_i\in [0,1]$, $i=1, ..,N$:
\begin{equation}
\begin{split}
&F_i(p_i,{\mathbf q})=(q_i-p_i) \Big(\alpha_i (J(q_i,1)-J(q_i,0))+\Big.\\&\Big.\sum_{j= 1}^N\alpha_j(J(1,q_j)-J(0,q_j))\Big)\geq 0
 \label{cond1}
\end{split}
 \end{equation} 
\item If $F(p_i,{\mathbf q})=0$ for  some $p_i\not=q_i$, then:
\begin{equation} 
(p_i-q_i)^2 \Delta<0 \Longrightarrow \Delta <0
 \label{cond2}
\end{equation}
\noindent where $\Delta=a-b-c+d$. 
\end{itemize}
\normalsize  

\subsection{Characterisation of fully mixed GESS}
In this section we  are interested in characterising the  full mixed GESS ${\bf q}$. According to (\ref{cond1}), a full mixed equilibrium ${\bf q}=(q_1,\ldots,q_N)$ is a GESS  if it satisfies the condition (\ref{cond2}) where the equality must holds for all $p\in[0,1]$. This  yields to  the following  equation: $\forall i=1,\ldots N$,
\small 
\begin{equation*}
 \alpha_i (J(q_i,1)-J(q_i,0))+\sum_{j=1}^N\alpha_j(J(1,q_j)-J(0,q_j))=0
\end{equation*}
\normalsize
\noindent which can be rewritten as
\begin{equation*}
 \alpha_i \Delta q_i  +  b-d+ \alpha_i (c-d)+\Delta \sum_{j= 1}^N \alpha_j q_j=0
\end{equation*}
This leads to the following expression of the mixed GESS:

\begin{equation}
q_i^*=\frac{ d-b+ \big((1+N)\alpha_i-1\big)(d-c)}{(N+1) \alpha_i \Delta };\\
\label{mixed}
\end{equation}
\begin{proposition}
If $\Delta<0$ and $0<q_i^*<1$, $i=1, \ldots ,N $, then there exists a unique fully  mixed GESS equilibrium given by (\ref{mixed}).
\end{proposition}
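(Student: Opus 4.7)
The plan is to reduce the GESS conditions for a fully mixed candidate to a linear system in the entries $q_i$, solve it explicitly, and then verify condition (\ref{cond2}) using $\Delta<0$.

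First I would observe that the function $p_i\mapsto F_i(p_i,\mathbf{q})$ defined in (\ref{cond1}) is affine in $p_i$, of the form $F_i(p_i,\mathbf{q})=(q_i-p_i)\,K_i(\mathbf{q})$ with
\[
K_i(\mathbf{q})=\alpha_i\bigl(J(q_i,1)-J(q_i,0)\bigr)+\sum_{j=1}^{N}\alpha_j\bigl(J(1,q_j)-J(0,q_j)\bigr).
\]
Since $q_i\in(0,1)$, both $p_i=0$ and $p_i=1$ are admissible deviations that lie on opposite sides of $q_i$. Requiring $F_i(p_i,\mathbf{q})\geq 0$ for both choices forces $K_i(\mathbf{q})=0$. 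Conversely, $K_i(\mathbf{q})=0$ for every $i$ makes $F_i(\cdot,\mathbf{q})\equiv 0$, so (\ref{cond1}) holds trivially. Hence a fully mixed GESS is characterised by the linear system $K_i(\mathbf{q})=0$, $i=1,\ldots,N$.

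Next I would solve this system explicitly. Substituting the bilinear payoff yields, for each $i$,
\[
\alpha_i\Delta\,q_i+\Delta\,S+(b-d)+\alpha_i(c-d)=0,\qquad S:=\sum_{j=1}^{N}\alpha_j q_j.
\]
Dividing by $\Delta$ and summing over $i$ (using $\sum_i\alpha_i=1$) gives a single equation for $S$, namely $(N+1)\Delta\,S=-N(b-d)-(c-d)$, whose unique solution, once reinserted in the per-$i$ equation, produces exactly the closed form (\ref{mixed}). This shows simultaneously existence and uniqueness of the mixed candidate: the hypothesis $0<q_i^*<1$ then guarantees that this candidate is indeed fully mixed.

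Finally I would check condition (\ref{cond2}). Because $K_i(\mathbf{q}^*)=0$, we have $F_i(p_i,\mathbf{q}^*)=0$ for every $p_i\in[0,1]$, so the hypothesis of (\ref{cond2}) is triggered for any $p_i\neq q_i^*$. The conclusion $(p_i-q_i^*)^2\Delta<0$ is immediate from $\Delta<0$ and $(p_i-q_i^*)^2>0$. Combining these three steps with Remark~\ref{rem-GESS-Nash} (which ensures that the Nash-equilibrium content of (\ref{cond1-GESS}) is also satisfied) shows that $\mathbf{q}^*$ is the unique fully mixed GESS.

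There is no substantive obstacle here: the only nontrivial point is realising that, because $F_i$ is affine in $p_i$ and the interior constraint $q_i\in(0,1)$ allows deviations in both directions, inequality (\ref{cond1}) collapses to the equality $K_i(\mathbf{q})=0$, after which everything reduces to solving an explicit $N\times N$ linear system.
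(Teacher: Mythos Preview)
Your proposal is correct and follows essentially the same route as the paper: both arguments observe that for a fully mixed candidate the affine-in-$p_i$ condition (\ref{cond1}) must collapse to the equality $K_i(\mathbf{q})=0$, reduce this to the linear system $\alpha_i\Delta q_i+(b-d)+\alpha_i(c-d)+\Delta S=0$, solve for $S$ by summation and then for each $q_i$, and finally invoke $\Delta<0$ for (\ref{cond2}). Your write-up is in fact slightly more explicit than the paper's in justifying why $K_i(\mathbf{q})=0$ is forced; the closing appeal to Remark~\ref{rem-GESS-Nash} is unnecessary, since verifying (\ref{cond1})--(\ref{cond2}) already establishes (\ref{cond1-GESS})--(\ref{cond2-GESS}) directly.
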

We note that the fully mixed  GESS is a strict Nash equilibrium since the condition (\ref{cond2}) is equivalent to the definition of the strict Nash equilibrium (see remark \ref{rem-GESS-Nash}) under the condition $F(p, q_1,\ldots,q_N)=0$, $\forall p\in[0,1]$.

\subsection{Characterisation of strong GESS}
We call a strong GESS an equilibrium  that satisfies the strict inequality (\ref{cond1}) for all groups.  Similarly to the fully mixed GESS, we present here the condition for the existence of a strong GESS.  Note that  all groups have to use pure strategy in a strong GESS.  Without loss of generality, we assume that a pure strong GESS can be represented by  $n_A$, where $n_A\in \{1,...,N\}$ denotes  that the $n_A$ first  groups use $A$ pure strategy and remaining $N-n_A$ groups chose strategy $B$.  For example $n_A=N$ (resp. $n_A=0$)  means that all groups choose pure strategy $A$ (resp. $B$).  

\begin{proposition}\label{prop-SGESS}
If $a\not= c$ or $b\not=d$, then every N-player game with two strategies has a GESS. There are the following possibilities for the \textbf{strong GESS}:

\begin{itemize}
\item[i.] 
If $a-c>\max_i(\alpha_i)\cdot (b-a)$ then $n_A=N$ is a  strong GESS;
\item[ii.]
If  $b-d<\min_i(\alpha_i)(d-c)$ then  $n_A=0$ is a  strong GESS; 
\item[iii.] Let $H(n_a):=\sum_{j=1}^{n_A} \alpha_j (a-c) + \sum_{j=n_A+1}^{N} \alpha_j (b-d)$. If $\alpha_i (d-c) > H(n_a) > \alpha_i(b-a)$ then $n_A$ is a strong GESS. 
\end{itemize}
\end{proposition}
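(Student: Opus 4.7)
The plan is to reduce the strict strong version of condition (\ref{cond1}) to a finite system of scalar sign conditions by exploiting the fact that each component $q_i$ of the candidate equilibrium is pure. Indeed, if $q_i\in\{0,1\}$ then the factor $(q_i-p_i)$ in (\ref{cond1}) has a constant sign on $p_i\in[0,1]\setminus\{q_i\}$: nonnegative when $q_i=1$, nonpositive when $q_i=0$. Hence strict inequality in (\ref{cond1}) holds for every admissible $p_i$ if and only if the bracketed quantity
$$B_i(\mathbf{q}) := \alpha_i\bigl(J(q_i,1)-J(q_i,0)\bigr) + \sum_{j=1}^N \alpha_j\bigl(J(1,q_j)-J(0,q_j)\bigr)$$
is strictly positive when $q_i=1$ and strictly negative when $q_i=0$. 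This converts a continuum of inequalities in $p_i$ into a handful of algebraic conditions in $a,b,c,d,\alpha_1,\ldots,\alpha_N$.

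The next step is to compute $B_i$ in closed form. From $J(p,q)=pqa+p(1-q)b+(1-p)qc+(1-p)(1-q)d$ one reads off $J(p,1)-J(p,0)=p(a-b)+(1-p)(c-d)$ and $J(1,q)-J(0,q)=q(a-c)+(1-q)(b-d)$. For a candidate profile $n_A$ (the first $n_A$ groups play $A$, the remaining $N-n_A$ play $B$), the $j$-sum splits into $\sum_{j\le n_A}\alpha_j(a-c)+\sum_{j>n_A}\alpha_j(b-d)=H(n_A)$, while the $\alpha_i$-term contributes $\alpha_i(a-b)$ when $q_i=1$ and $\alpha_i(c-d)$ when $q_i=0$.

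For case (i), $n_A=N$ yields $B_i=\alpha_i(a-b)+(a-c)$; imposing $B_i>0$ for every $i$ rearranges to $a-c>\alpha_i(b-a)$, which after taking the maximum over $i$ is exactly the stated bound. Case (ii) is the mirror image at $n_A=0$, giving $B_i=\alpha_i(c-d)+(b-d)<0$, i.e.\ $b-d<\alpha_i(d-c)$, and one takes the minimum over $i$. For case (iii) with $0<n_A<N$, the two blocks yield $B_i=\alpha_i(a-b)+H(n_A)$ (needed $>0$, equivalently $H(n_A)>\alpha_i(b-a)$) for $i\le n_A$, and $B_i=\alpha_i(c-d)+H(n_A)$ (needed $<0$, equivalently $H(n_A)<\alpha_i(d-c)$) for $i>n_A$; these are exactly the two inequalities in (iii).

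The delicate point is the bookkeeping in case (iii): the required sign of $B_i$ flips across the partition, so one must check that the simultaneous bounds $\alpha_i(b-a)<H(n_A)<\alpha_i(d-c)$ are met by every index in its own block. For the unconditional existence assertion---that some GESS always exists when $a\ne c$ or $b\ne d$---I would argue by a short case analysis on the signs of $a-c$ and $b-d$: in each sign combination at least one of (i)--(iii) applies, and in the remaining regime (typically $\Delta<0$ with interior $q_i^*\in(0,1)$) the fully mixed GESS from the preceding proposition supplies the equilibrium.
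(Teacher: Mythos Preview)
Your proposal is correct and follows essentially the same approach as the paper: exploit the constant sign of $(q_i-p_i)$ when $q_i\in\{0,1\}$ to reduce the strict form of (\ref{cond1}) to a sign condition on the bracketed quantity, then evaluate this quantity explicitly in terms of $a,b,c,d$ and the $\alpha_i$. In fact your write-up is more complete than the paper's own proof, which carries out only case (i) in detail, relegates (ii) and (iii) to ``symmetry,'' and does not address the unconditional existence assertion at all.
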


\begin{proof}  In order to prove that a strategy $n_A=N$ is a GESS, we have to impose the strict inequality, i.e.: $\;\; \forall p_i\not=1\mbox{ for } i\in\{1,..,n_A\} \mbox{ and }    \forall p_i\not=0 \mbox{ for } i\in\{n_A+1,..,N\}$
$$
F_i(p_i,1_{n_A}, 0_{N-n_A})>0
$$
\noindent We show here  conditions of the existence  only for $n_A=N$ since the  others one  straightforward follow from the symmetry of the players in the game.   

Consider now that $(n_A=N$) is a strong GESS. The inequality (\ref{cond1}) becomes: $\forall p_i\not=1,\;\;\forall i$
\small
$$
(p_i-1) \Big(\alpha_i (a-b) +\sum_{j=1}^N\alpha_j(a-c)\Big)=(p_i-1) \Big(\alpha_i (a-b) +a-c)\Big)<0,
$$
\normalsize

Since $p_i<1$, one has  $\alpha_i (b-a) <a-c$ $\forall i$.
This completes the proof of (i). To show conditions of the other strong GESSs, we follow the lines of the proof of (i). \end{proof}
%
%
%
%
%
%
%
%
%
%
%
%

\subsection{Characterisation of weak GESS}
We call a weak GESS an equilibrium in which at least one group uses a strategy that satisfies the condition (\ref{cond2}) with equality.  Here we distinguish two types of equilibrium: the equilibrium with no mixer group and the equilibrium with mixer and no mixer groups. Conditions for the equilibrium with no mixed strategy are given by Proposition \ref{prop-SGESS} with at the least one group satisfing it with equality and $\Delta<0$.  In this section we focus only on the equilibrium with mixer and no mixer groups.  
 Without loss of generality, we assume that an equilibrium with mixed and non mixed strategies, can be represented by $(n_A,n_B,\mathbf{q})$ where $n_A$ denotes that group $i$ for $i=1..,n_A$ (resp. $i=n_A+1, ..,n_A+n_B$) uses strategy $A$ (resp. $B$). The remaining groups $N-n_A-n_B$  are mixers in which $q_i$ is the probability to choose the strategy $A$ by group $i$. 
\begin{proposition}\label{prop-MPGESS}
Let either $a\not= c$ or $b\not=d$ and $\Delta<0$. $(n_A, n_B,  \mathbf{q})$ is a weak GESS if:

\small 
\begin{equation}
\left\{ \begin{array}{l l}
&\alpha_i \Delta + d-b+\alpha_i(c-d)+\Delta (\alpha_{n_A}+y) \geq0,\;i=1,..,n_A\\
& d-b+\alpha_i(c-d)+\Delta (\alpha_{n_A}+y) \leq 0,\; i=n_A+1,..,n_B\\
&q_i= \frac{d-b+\alpha_i(d-c)-y\Delta}{\Delta \alpha_i},\; i=n_A+n_B+1,..,N
\end{array}\right.
\label{mixer-non-mixer1}
\end{equation}
\normalsize
where $y=\frac{(N-n_A-n_B)(d-b-\sum_{j=1}^{n_A}\alpha_j )+(d-c) \sum_{j=n_A+n_B+1}^{N}\alpha_j}{\Delta (N-N_A-n_B+1)}$. 
\end{proposition}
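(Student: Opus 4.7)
The plan is to substitute the ansatz for $(n_A,n_B,\mathbf{q})$ into the two GESS conditions (\ref{cond1})-(\ref{cond2}) and read off the three cases as the three lines of the claimed system. First, I would simplify the bracketed factor appearing inside $F_i$. Using the payoff matrix $P$ one checks that $J(q_i,1)-J(q_i,0)=q_i\Delta+(c-d)$ and $J(1,q_j)-J(0,q_j)=q_j\Delta+(b-d)$, so
\begin{equation*}
B_i(\mathbf{q})\;=\;\alpha_i q_i\Delta+\alpha_i(c-d)+(b-d)+\Delta\,S,\qquad S:=\sum_{j=1}^{N}\alpha_j q_j,
\end{equation*}
and $F_i(p_i,\mathbf{q})=(q_i-p_i)\,B_i(\mathbf{q})$. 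This reduces the whole analysis to signing $B_i$.

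Next I would split into the three group types. For a pure $A$-group ($q_i=1$) we have $q_i-p_i\geq 0$ on $[0,1]$, so (\ref{cond1}) requires $B_i(\mathbf{q})\geq 0$; evaluating at $q_i=1$ and using $S=\alpha_{n_A}+y$ with $y:=\sum_{j=n_A+n_B+1}^{N}\alpha_j q_j$ yields exactly the first line of (\ref{mixer-non-mixer1}). For a pure $B$-group ($q_i=0$), the sign of $q_i-p_i$ flips, forcing $B_i(\mathbf{q})\leq 0$, which gives the second line. For a mixer ($0<q_i<1$), $q_i-p_i$ takes both signs as $p_i$ varies in $[0,1]$, so $B_i(\mathbf{q})=0$; solving this linear equation for $q_i$ produces the mixer formula in the third line.

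What remains is to eliminate the coupling among the mixers through the common aggregate $S$. Multiplying the mixer identity $\alpha_i\Delta\,q_i=(d-b)+\alpha_i(d-c)-\Delta S$ by one and summing over the $N-n_A-n_B$ mixer indices gives a single scalar equation in $y$ alone, which solves to the displayed closed form for $y$; this is the step I expect to be the main bookkeeping obstacle, but it is essentially the same manipulation used to derive (\ref{mixed}). Finally, condition (\ref{cond2}) is automatic: as soon as any group is a mixer (or a pure group lies at equality of the corresponding inequality) there exists $p_i\neq q_i$ with $F_i(p_i,\mathbf{q})=0$, so the implication in (\ref{cond2}) demands $\Delta<0$, which is part of the hypothesis. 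Putting the three cases together with the solved value of $y$ yields precisely the characterization claimed in the proposition.
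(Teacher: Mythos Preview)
Your proposal is correct and follows essentially the same route as the paper: compute the bracketed factor $B_i$ in $F_i(p_i,\mathbf q)=(q_i-p_i)B_i(\mathbf q)$, split into the three cases $q_i=1$, $q_i=0$, $q_i\in(0,1)$ to obtain the three lines of (\ref{mixer-non-mixer1}), and then sum the mixer identities to extract the closed form for $y$. If anything, your write-up is more complete than the paper's own proof, since you explicitly verify that condition (\ref{cond2}) is handled by the standing hypothesis $\Delta<0$, a point the paper leaves implicit.
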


\begin{proof} 
We assume that  $(n_A,n_B,\mathbf{q})$  is GESS.  From the condition (\ref{cond1})  we get

\small
\begin{equation}
\left\{
\begin{array}{ll}
 \alpha_i \Delta +  b-d+ \alpha_i (c-d)+\Delta y>0,&i=1,..,n_A\\
 b-d+ \alpha_i (c-d)+\Delta y<0,&i=n_A+1,..,n_B\\
   \alpha_i \Delta q_i  +  b-d+ \alpha_i (c-d)+\Delta y=0,& i=n_A+n_B+1,..,N
 \end{array}
  \right.
  \label{simple}
\end{equation}\normalsize
\noindent where $y=\sum_{i=1}^N \alpha_j q_j$. To compute $q_i$, $i=n_A+n_B+1,..,N$, we add the $N-n_A-n_B$  last equations' left hand sides together in (\ref{simple}), which gives:
\tiny
\begin{equation}
\Delta y -\Delta \alpha_{n_A}+ (N-n_a-n_B)(b-d) + (c-d)\cdot \sum_{j=n_A+n_B+1}^N \alpha_j  +\Delta  (N-n_a-n_B) y=0
\end{equation}  \normalsize
Thus:
 \tiny
\begin{equation}
y=\frac{(N-n_A-n_B)(d-b)+ \Delta \alpha_{n_A}+(d-c) \sum_{j=n_A+n_B+1}^{N}\alpha_j}{\Delta (N-N_A-n_B+1)}
\end{equation}  
\normalsize
This agrees with (\ref{mixer-non-mixer1}), completing the proof of the proposition.
 \end{proof}


%
%
%

\section{Some Examples}
\label{Examples}
In this section we analyze a number of examples with two players and two strategies.

\subsection{Hawk and Dove Game}
One of the most studied examples in EG theory is the Hawk-Dove game, first introduced by Maynard Smith and Price in "The Logic of Animal Conflict".  In this game two  animals compete for a resource of a fixed value.  Each animal follows one of two strategies, Hawk or Dove, where  Hawk corresponds to an aggressive behavior,  Dove  to a non-aggressive one. If two Hawks meet, they fight and one of them gets the resource while the other is injured, with equal probability. A Hawk always wins against a Dove, whereas if two Doves meet they equally share the resource. 
The payoff matrix of the game is the following:
$$
\bordermatrix{~ & H & D \cr
                  H & \frac{1}{2}(V-C)& V  \cr
                  D & 0 & V/2 \cr}                 
$$

\noindent where $C$ represent the cost of the fight, and $V$ is the benefit the palyer get from the resource. We suppose that $C>V$. 

 In standard GT, this example belongs to the \textit{anti-coordination class}, whose games always have two strict, pure strategy NEs and one non-strict, mixed strategy NE. In this case the two pure equilibria are $(H,D)$ and $(D,H)$, and the mixed-one is given by: $q^*=\frac{V}{C}$. The latter is the only ESS: even if the two pure NE are strict, being asymmetric they can't be ESSs.
We set $V=2$ and $C=3$ and  study this game in groups framework, considering two groups of size $\alpha$ and $1-\alpha$.

We obtain that the GESSs and the strict NE always coincides. In particular we observe that:
\begin{itemize}
\item for $0<\alpha<0.25$ the game has one strong GESS $(H,D)$ and a weak GESS $(H,q_2)$ ;  
\item  for $0.25<\alpha<0.37$: one weak GESS $(H,q_2)$;  
\item  for $0.37<\alpha<0.5$ one weak GESS,  $(q^*_1,q^*_2)$;
\end{itemize}

The size of groups has a strong impact on the beahvior of players: in the first interval of $\alpha-$values we remark that the GESS is not unique; when increasing the size of the first group, and thus decreasing that of the second one, the probability that the second group plays aggressively against the pure aggressive strategy of the first one increases until we get into the third interval, where both players are mixers. Here we can clearly observe that the equilibrium $q_1^*$ is decreasing in $\alpha$: as we supposed that an individual can interact with members of its own group, when increasing $\alpha$, the probability of meeting an individual in the same group increases and it leads to a less aggressive behavior.

\begin{figure}[!htb]
    \centering
        \includegraphics[height=5.2cm,width=7cm]{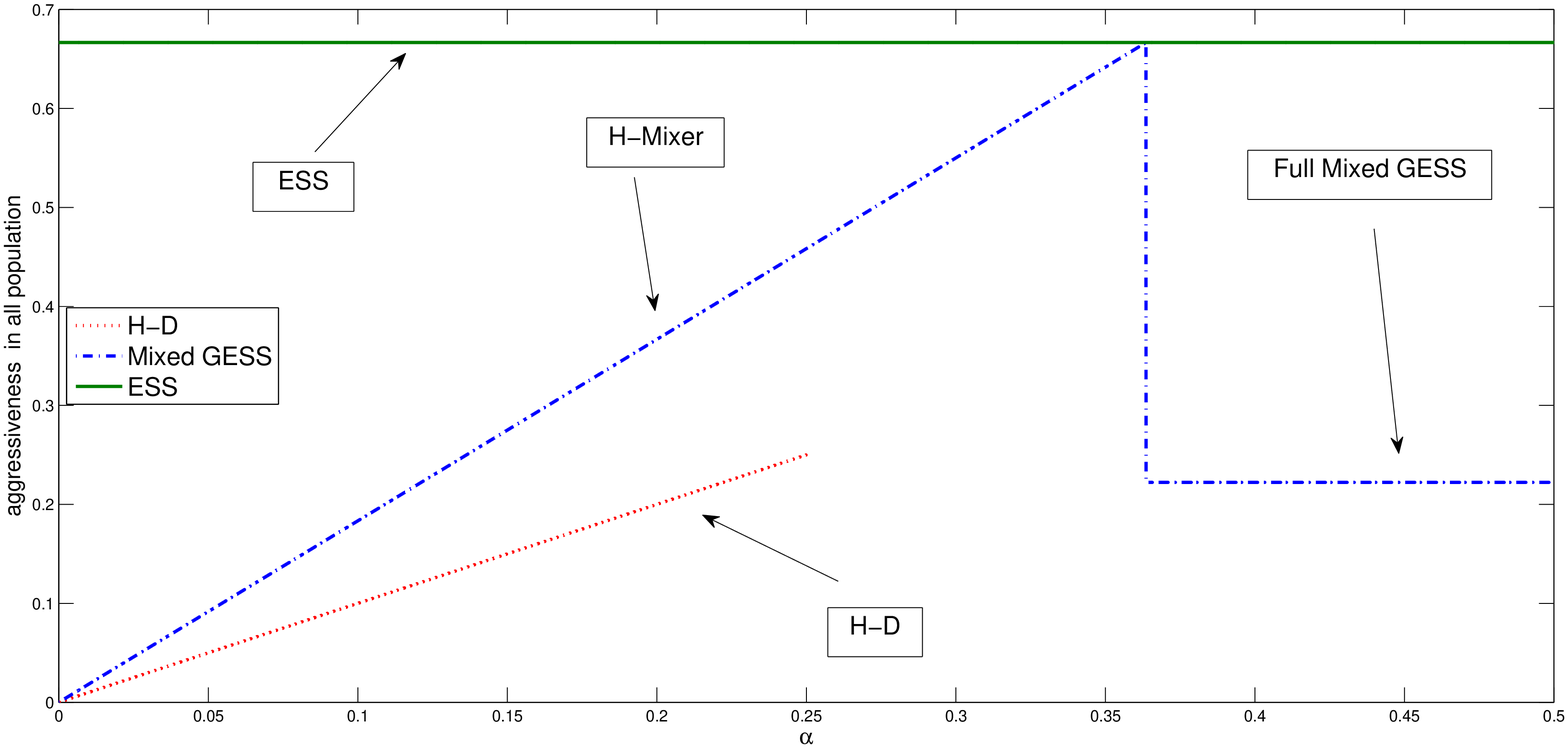}\label{HDalpha.eps}

       \caption{The global level of aggressiveness in the two-groups population for the different GESSs, as a function of $\alpha$ }
\end{figure}

\subsection{Stag Hunt Game}
We now consider a well-known example which belongs to the \textit{coordination class}, the Stag Hunt game. The story behind has been described by J-J. Rousseau: two individuals go out on a hunt; if they cooperate they can hunt a stag; otherwise, hunting alone, a hunter can only get e hare, so collaboration is rewarding for players. It represents a conflict between social and safely cooperation.
The payoff matrix is the following:

$$
\bordermatrix{~ & S & H \cr
                 S & a & b \cr
                  H & c & d \cr}                 
$$

\noindent where $S$ and $H$ stand respectively for Stag and Hare and  $a>c\geq d>b$. Coordination games have two strict, pure strategy NEs and one non-strict, mixed strategy NE, respectively the risk dominant equilibrium $(H,H)$, the payoff dominant one $(S,S)$ and the mixed symmetric one with $q_1^*=q_2^*=\frac{d-b}{a-b-c+d}$.

We set $a=2$, $b=0$, $c=1$, $d=1$ and we look for the equilibria of the two groups gama as a function of $\alpha$. We find that the strict GESSs and the strict NEs don't coincide, as we found strategies, which are strict GESSs but not NEs. The two-groups Stag-Hunt Game only have the pure-pure strict NE $(S,S)$, for all values of $\alpha$, whereas, for the GESSs we find that:  
\begin{itemize}
 \item  for $0<\alpha<0.5$ the game admits two pure-pure strong GESSs: $(S,S)$ and $(H,H)$;
\item for $0.25<\alpha<0.5$ the game admits three pure-pure strong GESSs: $(S,S)$ and $(H,H)$ and $(S,H)$;  
\item  the game doesn't admit any strict mixed NE.  
\end{itemize}


\begin{figure}
    \centering
        \includegraphics[width=7cm]{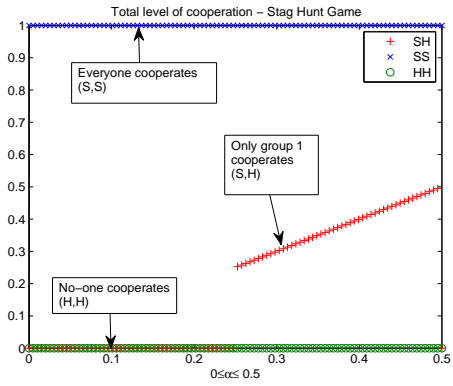}
        \label{CoopSH}

       \caption{The global level of cooperation in the two-groups population for the different GESSs, as a function of $\alpha$ in \ref{CoopSH} and of $x$ in \ref{CoopSHx} }
\end{figure}

\subsection{Prisoner's Dilemma}

We consider another classical example in game theory, the Prisoner's Dilemma, which belongs to a third kind of games, the \textit{pure dominance class}. 

Two criminals are arrested and separately interrogated: they can either accuse the other, either remain silent. If both of them accuse the other (defect) , they will be both imprisoned for 2 years. If only one accuse the other, the accused is imprisoned for 3 years while the other is free. If both remain silent (cooperate), each of them will serve one year in jail. The payoff matrix is the following:
$$
\bordermatrix{~ &C & D \cr
                  C & a & b \cr
                 D & c & d \cr}                 
$$

\noindent where $C$ and $D$ stand respectevely for collaboration and defection and where $c>a>d>b$.

In standard GT, pure dominance class games admit a unique pure, strict and symmetric NE, which also is the unique ESS; in the Prisoner's Dilemma it's $(D,D)$.

We set $a=2$, $b=0$, $c=3$, $d=1$ and we study the two-groups corresponding game. As in the previous example, we find strict GESSs which are not strict NEs. In particular, we have that:
\begin{itemize}
\item $(C,C)$ is always a GESS and a strict NE for all values of $\alpha$;  
\item $(D,D)$ is a GESS  for all values of $\alpha$ but it is never a strict NE;
\item  $(C,D)$ (symmetrically $(D,C)$ ) is always a GESS and a strict NE for $0.5<\alpha<1$ (symmetrically $0<\alpha<0.5$);
\item the game doesn't admit mixed GESSs;
 \end{itemize}

The two groups game thus admits three pure GESSs and  two pure strict NEs for all values of $\alpha$.

\begin{figure}[!htb]
\centering
\includegraphics[width=6cm]{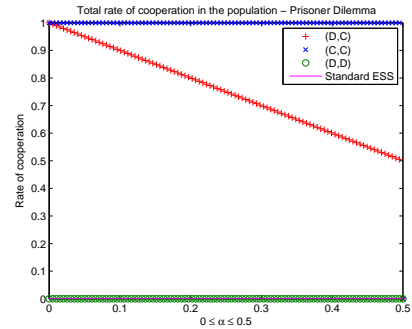}
\caption{Total rate of collaboration for different GESSs as a function of group 1 size $\alpha$ in the prisoner's Dilemma} 
\label{pd}
\end{figure}

%
%
%
%

\section{Multiple Access Control}
\label{MAC}

In this section we briefly introduce a refinement of our model, which will be further developed in future works. 
We modify the group utility function defined in (\ref{utiliti}) by supposing that the immediate payoff matrix differs if the interacting individuals belong to the same group or to two different ones. 

The utility function of a group $i$ playing $q_i$ against a population profile $q_{-i}$ can be written as follows:
\begin{equation}
U(\mathbf{q_i},\mathbf{q_{-i}})=\alpha_iK(\mathbf{q_i},\mathbf{q_i})+\sum_{j\neq i}\alpha_iJ(\mathbf{q_i},\mathbf{q_{j}}),
\end{equation}

\noindent where $K(\mathbf{p},\mathbf{q})$ indicates the immediate expected payoff of an individual playing p against a member of its own group using q , whereas $J(\mathbf{p},\mathbf{q})$ is the immediate expected payoff associated to interactions among individuals of different groups.

We present a particular application of this model in Aloha system in which a large population of mobiles interfere with each other through many local pairwise interactions.  We assume that the population is decomposed into $N$ groups  $G_i$, $i=1,2,..,N$ of normalized size $\alpha_i$ with $\sum_{j=1}^N \alpha_i=1$.  Mobiles are randomly placed over a plane and  matched through pairwise interactions where each mobile decides either to transmit $(T)$ or to not transmit $(S)$ a packet to a
receiver when they are within transmission range of each other. 
Interferences occur as in the Aloha protocol: if more than one neighbor of a receiver transmits a packet at the same time then there is a collision and the transmission fails. The channel is ideal for transmission and all errors are due to collisions.
Let $\mu$ be the probability that a mobile $k$ has its receiver $R(k)$ within its
range. When a mobile $k$ transmits, all mobiles within a circle of radius $R$ centered at
node $R(k)$ cause interference to $k$ for its transmission to $R(k)$, so that more than
one transmission within a distance $R$ of the receiver in the same slot cause a collision and the
loss of mobile's $i$ packet at $R(k)$.

A mobile  of group $i$ may use a mixed strategy $\mathbf{p}_i=(p_i,1-p_i)$   where  $p_i$ is the  probability  to choose the action $(T)$.  
Let $\gamma$ denotes the probability that a mobile is alone in a given local interaction and the tagged mobile does not know whether there is another transmitting mobile within its range of transmission.

Let $P_1$ (resp. $P_2$) be the immediate payoff matrix of interactions among individuals belonging to the same group (resp. of two different ones):

\small
$$ P_1\equiv\bordermatrix{~ & T & S \cr
                  T & -2\delta & 1-\delta \cr
                  S & 1-\delta & 0 \cr}, \;\qquad P_2\equiv\bordermatrix{~ & T & S \cr
                  T & \-\delta & 1-\delta \cr
                  S & 0 & 0 \cr}. \;
$$
\normalsize
where $0<\delta<1$ is the  cost of transmission.  The definition of $P_1$ implies that when two mobiles of the same group $i$ interact, any successful transmission is equally rewarding for the group $i$.   
The resulting expected payoff functions of a mobile playing $\mathbf{q}_i$ against a member belonging to its own group and to a different one, using respectively the same strategy $\mathbf{q}_i$ and a different one $\mathbf{q}_j$ are the following: 
\small

\begin{equation*}
\begin{split}
K(\mathbf{q}_i,\mathbf{q}_i)&=\mu\left[ q_i[\gamma(1-\delta)+(1-\gamma)((1-\delta)(1-q_i)-2\delta q_i)]\right.\\&\left.+(1-\gamma)(1-\delta)(1-q_i)q_i\right]\\&=\mu q_i[(1-\delta)(2-\gamma)-2(1-\gamma)q_i]
\end{split}
\end{equation*}

\begin{equation*}
\begin{split}
J(\mathbf{q}_i,\mathbf{q}_j)&=\mu q_i[\gamma(1-\delta)+(1-\gamma)((1-\delta)(1-q_j)-\delta q_j]\\&=\mu q_i[1-\delta-(1-\gamma)q_j]
\end{split}
\end{equation*}

\normalsize
The expected throughput  of group $i$ is then given by: 

\small
\begin{equation}
\begin{split}
U(\mathbf{q}_i,\mathbf{q}_{-i})&=\mu q_i[1-\delta+(1-\gamma)(\alpha_i(1-\delta-q_i)-\sum_{j=1}^N\alpha_j q_j)]
\end{split}
\end{equation}
\normalsize

By following the same analysis as in section \ref{NNCEG}, the strategy  $\mathbf{q}$ is a GESS  if $\forall i=1,\ldots N$ the two following conditions are satisfied:

%
%
%
%

\begin{enumerate}
\item $F'_i(\mathbf{p}_i,\mathbf{q})\equiv(q_i-p_i)[1-\delta+(1-\gamma)(\alpha_i(1-\delta-2q_i)-\sum_{j=1}^N\alpha_jq_j)]\geq 0\quad\forall p_i,$
\item If $F'_i(\mathbf{p}_i,\mathbf{q})=0$ for some $p_i\neq q_i$, then  $(p_i-q_i)^2(1-\gamma)\alpha_i >0\quad  \forall p_i\neq q_i.$
\end{enumerate}

We observe that the inequality $(p_i-q_i)^2(1-\gamma)\alpha_i >0$ holds for all values of the parameters which implies that the second condition is always satisfied and that the first condition is sufficient to guarantee the existence of a GESS. In the following proposition we give a characterization of the GESSs of the presented MAC game.  Without loss of generality, we reorder the groups so that $\alpha_1\leq\alpha_2\ldots\leq\alpha_N$.

\begin{proposition} We find that:

\begin{itemize}
\item The pure symmetric strategy $(S,\ldots,S)$ is never a GESS.
\item If a fixed group $G_i$ adopts pure strategy $T$, then at the equilibrium, all smaller groups transmit. If the bigger group $G_N$ use strategy $T$ at the equilibrium, then $\gamma\geq\bar{\gamma}$. 
\item If a fixed group $G_i$ adopts pure strategy $S$, then at the equilibrium, all smaller groups also use $S$.
\item If a fixed group $G_i$ adopts an equilibrium mixed strategy $q_i\in]0,1[$, then if $q_i>\frac{1-\delta}{2}$, at the equilibrium all smaller groups use pure strategy $T$, whereas if $q_i<\frac{1-\delta}{2}$, smaller groups play $S$.
\item The game admits a unique fully mixed GESS $\mathbf{q^*}=(q_1^*,\ldots,q^*_N)$, given by: 
\small
\begin{equation}
\begin{split}
q^*_i&=\frac{(1-\delta)(1+\gamma+(1-\gamma)(2+N)\alpha_i)}{2(N+2)(1-\gamma)\alpha_i}\end{split}
\label{MACGESS}
\end{equation}
\normalsize
under the condition: $\gamma<\underline{\gamma}$.
\end{itemize}

\noindent The thresholds $\underline{\gamma}$ and $\bar{\gamma}$ are defined as follows: 
\small
$$\underline{\gamma}\equiv\min_{\alpha_i}\frac{\alpha_i(N+2)(1+\delta)-(1-\delta)}{\alpha_i(N+2)(1+\delta)+(1+\delta)},$$
$$\bar{\gamma}\equiv\max_{\alpha_i}\left(1-\frac{1-\delta}{\alpha_i(\delta+1)+1}\right).$$

\normalsize

\end{proposition}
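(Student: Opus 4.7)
The plan is to reduce every assertion to the sign of a single scalar,
\[
\Psi_i(\mathbf{q}) := 1-\delta + (1-\gamma)\bigl(\alpha_i(1-\delta-2q_i) - Y\bigr), \qquad Y := \sum_{j=1}^{N}\alpha_j q_j.
\]
The factorisation $F'_i(p_i,\mathbf{q}) = (q_i - p_i)\,\Psi_i(\mathbf{q})$, combined with the fact that the second GESS condition is automatic (as noted just before the proposition), turns condition (i) into the following trichotomy for each $i$: $q_i=1 \Rightarrow \Psi_i\ge 0$; $q_i=0 \Rightarrow \Psi_i\le 0$; and $q_i\in(0,1)\Rightarrow \Psi_i=0$. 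Rewritten in terms of $Y$ these become
\[
q_i=1 \Rightarrow Y\le \tfrac{1-\delta}{1-\gamma} - \alpha_i(1+\delta), \quad q_i=0 \Rightarrow Y\ge \tfrac{1-\delta}{1-\gamma} + \alpha_i(1-\delta),
\]
\[
q_i\in(0,1) \Rightarrow q_i = \tfrac{1-\delta}{2} + \tfrac{1}{2\alpha_i}\Bigl[\tfrac{1-\delta}{1-\gamma}-Y\Bigr].
\]
These three relations will be the workhorses of the entire proof.

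Everything then follows by short case analyses. For the all-$S$ bullet, plug $\mathbf{q}=\mathbf{0}$: then $Y=0$ and $\Psi_i = (1-\delta)\bigl(1+(1-\gamma)\alpha_i\bigr) > 0$, contradicting the $q_i = 0$ requirement. For the three monotonicity bullets (pure $T$, pure $S$, and mixed), I will combine two of the $Y$-relations across a pair $(i,j)$ with $\alpha_j \le \alpha_i$ and check that any configuration in which the smaller group plays less aggressively than the larger group produces an impossible inequality of the form $\alpha_j(1-\delta) + \alpha_i(1+\delta) \le 0$. For instance, $q_i=1$ with $q_j=0$ is killed by simply subtracting the two $Y$-bounds, while $q_i=1$ with $q_j\in(0,1)$ is killed because $q_j<1$ forces $\alpha_j(1+\delta)>\tfrac{1-\delta}{1-\gamma}-Y\ge \alpha_i(1+\delta)$. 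For the second part of the $T$-bullet, propagating $q_N=1$ down to give every $q_k=1$ yields $Y=1$; the $Y$-bound for each $i$ then rearranges to $\gamma \ge 1-\tfrac{1-\delta}{1+\alpha_i(1+\delta)}$, and maximising over $i$ recovers exactly $\bar\gamma$.

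For the fully mixed GESS I will impose $\Psi_i=0$ for every $i$ and sum the resulting $N$ linear equations; using $\sum_i\alpha_i=1$ and $\sum_i\alpha_i q_i = Y$ collapses the system into a single equation in $Y$ whose solution is $Y^\star = \tfrac{(1-\delta)(N+1-\gamma)}{(N+2)(1-\gamma)}$. Substituting $Y^\star$ back into the mixed-equilibrium formula and simplifying produces the closed form \eqref{MACGESS}. Uniqueness is immediate since the $N$ equations $\Psi_i=0$ pin down every $q_i$ once $Y$ is known. Existence reduces to the algebraic condition $0<q_i^\star<1$ for every $i$: the numerator is manifestly positive, and $q_i^\star<1$ rearranges into an explicit upper bound on $\gamma$ which, after taking the minimum over $i$, is $\underline{\gamma}$.

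The delicate step will be the mixed-group bullet when the smaller group $G_j$ is itself permitted to mix: with $q_i>\tfrac{1-\delta}{2}$ one immediately gets $Y<\tfrac{1-\delta}{1-\gamma}$, which rules out $q_j=0$, but the mixed formula only forces $q_j\ge q_i$ rather than $q_j=1$. The resolution is that the mixed formula is admissible only while its right-hand side stays strictly below one; as soon as $\alpha_j$ drops below the threshold $\bigl(\tfrac{1-\delta}{1-\gamma}-Y\bigr)/(1+\delta)$, the mixed prescription exceeds $1$ and the unique choice compatible with $\Psi_j\ge 0$ becomes the pure strategy $q_j=1$. Ordering the groups by $\alpha$ and carefully tracking this threshold across the whole population is the bookkeeping that completes the argument.
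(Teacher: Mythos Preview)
Your strategy is essentially the paper's own: both arguments hinge on the factorisation $F'_i(p_i,\mathbf{q})=(q_i-p_i)\Psi_i(\mathbf{q})$, the trichotomy on the sign of $\Psi_i$, and the summation trick to extract $Y^\star$ in the fully mixed case. Your derivation of $Y^\star$ and of the closed form \eqref{MACGESS}, together with the $\bar\gamma$ threshold obtained from $Y=1$, mirror the paper line for line.

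On the monotonicity bullets you are in fact more careful than the paper. The paper, for the $T$ bullet, only checks that $q_j=1$ is \emph{consistent} with $\Psi_j\ge 0$ when $\alpha_j<\alpha_i$; it never excludes $q_j=0$ or $q_j\in(0,1)$. Your plan of combining the two $Y$--bounds to force $\alpha_j(1-\delta)+\alpha_i(1+\delta)\le 0$, and of using $q_j<1$ in the mixed formula to force $\alpha_j(1+\delta)>\alpha_i(1+\delta)$, actually closes that gap.

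The one place where your proposal remains incomplete is the ``delicate step'' you flag yourself. When $G_i$ mixes with $q_i>\tfrac{1-\delta}{2}$ and $\alpha_j<\alpha_i$, the mixed formula gives $q_j>q_i$ but does not force $q_j\ge 1$ unless $\alpha_j$ is below $\bigl(\tfrac{1-\delta}{1-\gamma}-Y\bigr)/(1+\delta)$. Your final sentence about ``tracking the threshold'' does not eliminate the intermediate range in which $q_j\in(0,1)$ is perfectly consistent, so the literal claim ``all smaller groups use pure strategy $T$'' does not follow. This is not a flaw relative to the paper: the paper's own proof of that bullet argues only the converse implication (and contains a visible typo), so neither argument establishes the bullet as stated. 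If you want a provable version, the correct conclusion is that smaller groups transmit with probability at least $q_i$, becoming pure $T$ only once $\alpha_j$ drops below the threshold you identified.
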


\begin{proof} A strategy $\mathbf{q}=(q_1,\ldots, q_N)$ is a GESS if $\forall i=1,\ldots,N$, the condition  $F'_i(\mathbf{p}_i,\mathbf{q})\geq 0$ is verified  $\forall p_i\in[0,1]$.
\begin{itemize}
\item If $q_i=0\;\forall i\Rightarrow F_i(\mathbf{p}_i,\mathbf{0})=-p_i[1-\delta+(1-\gamma)(1-\delta)\alpha_i<0\;\forall p_i$, which proves that $(S,\ldots,S)$ is never a GESS.
\item Let $\mathbf{q}$ be a GESS such that $q_i=1$ for a fixed $i$. This implies that $1-\delta-(1-\gamma)(\alpha_i(1+\delta)+Y]\geq0$, with $Y\:=\sum_{j=1}^N\alpha_jq_j)$. Then, if $\alpha_j<\alpha_i$ we have that  $1-\delta-(1-\gamma)(\alpha_j(1+\delta)+Y]\geq 1-\delta-(1-\gamma)(\alpha_i(1+\delta)+Y]\geq 0 $ and thus $q_j=1$ satisfy the GESS condition. If all the groups transmit, then the condition for the GESS is satisfied iff $\forall i :$ $1-\delta-(1-\gamma)((1+\delta)\alpha_i-1)\geq 0$, and thus $\gamma\geq 1-\frac{1-\delta}{\alpha_i(\delta+1)+1}$.
\item Let $\mathbf{q}$ be a GESS such that $q_i=0$ for a fixed $i$. This implies that $1-\delta+(1-\gamma)(\alpha_i(1-\delta)-Y]\leq0$. If $\alpha_j<\alpha_i$, $1-\delta+(1-\gamma)(\alpha_i
j(1-\delta)-Y]\leq1-\delta+(1-\gamma)(\alpha_i(1-\delta)-Y]\leq0$ and thus $F'_j(\mathbf{p}_i,\mathbf{q})\geq0$.
  \item Let $\mathbf{q}$ be a GESS such that $q_i\in]0,1[$ for a fixed $i$.Then, if for a $j<i$, $q_j=0$ (resp. 1), $F'_j(\mathbf{p}_i,\mathbf{q})\geq 0$ implies that $q_i>\frac{1-\delta}{2}$ (resp. $q_i>\frac{1-\delta}{2}$). 
\item  Let $\mathbf{q}$ be a fully mixed GESS. Then, $\forall i$: $1-\delta+(1-\gamma)(\alpha_i(1-\delta-2q_i)-Y)=0$. 
After some albebra we thus obtain that $Y=\frac{(1-\delta)(N+1-\gamma)}{(1-\gamma)(N+2)}$, and by substituting it in the previos equations we obtain the expressions of $q_i$. By imposing that $0<q_i<1\;\forall i$ we obtain the condition $\gamma<\underline{\gamma}$.
\end{itemize}
\end{proof}

%
%
%
%
%
%

As an example, we consider a two groups MAC game, in which we fix a low value of the cost of transmission, $\delta=0.2$, and groups' sizes $\alpha_1=\alpha=0.4$, $\alpha_2=1-\alpha=0.6$, and we vary the value of the parameter $\gamma$. The game admits three different equilibria, depending on  $\gamma$: a fully mixed GESS $\mathbf{q}^*_M$, a pure GESS  $\mathbf{q}^*_P$, and a pure-mixed one $\mathbf{q}^*_{PM}$. In figure \ref{figMAC1} we consider the fully mixed and the pure GESS. We have that for $\gamma<\underline{\gamma}=0.3$ the game admits a GESS $\mathbf{q}^*\mathbf{q}_M^*=(q_1^*,q_2^*)$, whose components are plotted. Then, for $\gamma=\bar{\gamma}>0.53$, $\mathbf{q}^*\mathbf{q}_P^*=(T,T)$. We also plot the value of $q^*_{std}:=\min(1,\frac{1}{1-\gamma}-\Delta$. We note that fully mixed equilibrium strategies adopted by the two groups, $q_1^*, q_2^*$, are both lower then $q^*_{std}$.

\begin{figure}[!htb]
    \centering
        \includegraphics[height=5.2cm,width=7cm]{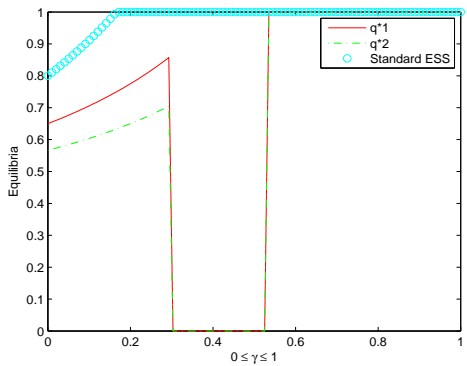}
       \caption{The value of the equilibrium strategy $q^*_1$ and $q_2^*$ in a two groups MAC game as a function of $\gamma$ for $\alpha=0.4$ compared to $q^*_{std}$ .}\label{figMAC1}
\end{figure}

In figure \ref{figMAC2} we plotted the value of the second component of the pure-mixed GESS of the game: $(T,q_T)$, which exists only for $0\leq\gamma<0.4$, and we compare it to $q^*_{std}$.  We note that, for the second group the probability of transmitting is always lower than in the standar game, whereas 

\begin{figure}[!htb]
    \centering
        \includegraphics[height=5.2cm,width=7cm]{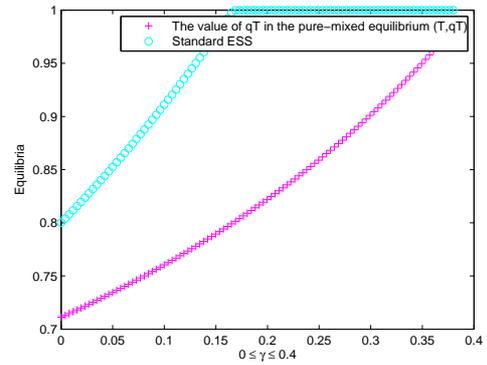}
       \caption{The value of the equilibrium strategy $q^*_2$ of the second group in the pure-mixed equilibrium $(T,qT)$ as a function of $\gamma$ for $\alpha=0.85$ compared to $q^*_{std}$ .}\label{figMAC2}
\end{figure}

Let $p_S(\mathbf{p})$ be the probability of a successful transmission in a population under profile $\mathbf{p}$. For $N=2$, we have: 
\small
\begin{equation*}
\begin{split}
p_S(\mathbf{p})&=\mu[\gamma (\alpha p_1+(1-\alpha)p_2)] +(1-\gamma)(2\alpha^2 p_1(1-p_1)+\\&+\alpha(1-\alpha)((1-p_2)p_1+(1-p_1)p_2)+2(1-\alpha)^2p_2(1-p_2))].
\end{split}
\end{equation*}
\normalsize

In figure \ref{figPsP} we plotted the value of $p^*_S=p_S(\mathbf{q}_M^*)$ as a function of $\gamma$ for $\alpha=0.4$. We note that, $\gamma<\underline{\gamma}$, even if in the groups game the probability to transmit is lower at the equilibrium.

\begin{figure}[!htb]
    \centering
        \includegraphics[height=5.2cm,width=7cm]{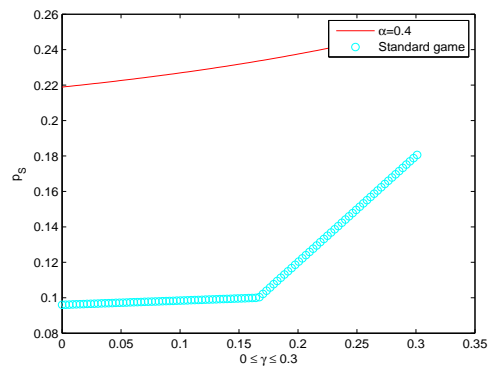}
       \caption{The probability $p_S(\mathbf{q}_M^*)$ of a successful transmission for the fully mixed GESS in a two groups MAC game at the equilibrium as a function of $\gamma$ .}\label{figPsP}
\end{figure}

\section{Conclusions}
\label{conc}
In this work we presented a new concept of Evolutionarily Stable Strategy in a group-players framework, the GESS, exploring its relation with the Nash equilibrium and with the standard ESS.  By analyzing some examples of two players and two strategies games, we observed how the presence of groups impacts the behavior of individuals and changes the structure of the equilibria.
We then introduce a slightly different situation by redefining the utility of a group, in such a way to consider different utilities for interactions among members of the same group or of a different one. Through a MAC example we showed how the presence of groups can favor cooperative behaviors. 
There are still many issues open for future studies. We are now studying the replicator dynamics in the group-players population in order to investigate the relation between our GESS and the rest point of such dynamics. 
At the theoretical level, we want to further deepen the study of the utility of a group, considering selfish and altruistic components.


\begin{thebibliography}{9999}

\bibitem{AH10}
E. Altman and Y. Hayel, Markov Decision 
Evolutionary Games, IEEE Transactions on
Automatic Control, vol 55 No. 7, pp. 1560 - 1569, 2010.


\bibitem{Cressman}
R. Cressman, 2003, Evolutionary Dynamics and Extensive Form Games,
\emph{MIT Press, Cambridge, MA}.

\bibitem{experiment}
Daniel Friedma,
"Equilibrium in Evolutionary Games: Some Experimental Results"
The Economic Journal, pages 1-25, Jan. 1996.

\bibitem{gene}
R Dawkins. 
{\it The Selfish Gene},
Oxford University Press, Oxford, UK, (1976)

\bibitem{FH03}
E.~Friedman and S.~Henderson, ``Fairness and efficiency in processor sharing
  protocols to minimize sojourn times,'' \emph{Proceedings of ACM SIGMETRICS},
  pp. 229--337, 2003.

\bibitem{HS98}
J. Hofbauer and K. Sigmund, 1998, Evolutionary Games and Population 
Dynamics, \emph{ Cambridge University Press}.

\bibitem{MaynSm-Price}
John Maynard Smith and George R.  Price, (1973). "The logic of animal 
conflict". Nature 246 (5427): 15-18.

\bibitem{sandholm}
William H. Sandholm, {\it Population Games and Evolutionary Dynamics},
MIT Press, 2009.

\bibitem{sak07}
S. Shakkottai, E. Altman and A. Kumar, \emph{Multihoming of Users to Access Points in WLANs: A
Population Game Perspective}, IEEE Journal on Selected Areas in Communication, Special Issue on
Non-Cooperative Behavior in Networking Vol 25 No 6, 1207-1215, Aug 2007.

\bibitem{tembineWiopt}
A.~K. Srinivas~Shakkottai, Eitan~Altman, ``Evolutionary power control games in
  wireless networks",'' \emph{14 Journal on Selected Areas in Communications},
  pp. 1207--1215.

\bibitem{SAK}
S.~Shakkottai, E.~Altman, and A.~Kumar, ``{The Case for Non-Cooperative
  Multihoming of Users to Access Points in IEEE 802.11 WLANs},'' in \emph{IEEE
  Infocom}, Barcelona, Spain, 2006.

\bibitem{Zheng}
Y.~Zheng and Z.~Feng, ``Evolutionary game and resources competition in the
  internet,'' in \emph{Modern Communication Technologies, 2001. SIBCOM-2001.
  The IEEE-Siberian Workshop of Students and Young Researchers}, 2001, pp.
  51--54.
\bibitem{tembine}
H. Tembine, E. Altman, R. El Azouzi, Y. Hayel, Evolutionary games in Wireless 
Networks, IEEE Transactions on System, Man and Cybernetics, 
Volume: 40 , Issue: 3, pp 634 - 646, 2010.


\bibitem{taylor} 
P. Taylor and L. Jonker, {\it  Evolutionary stable strategies and game
dynamics}. Mathematical Biosciences, 16:76-83, 1978.

\bibitem{vincent} Vincent, T. L. and Brown, J. S., {\it Evolutionary Game Theory, Natural Selection \& Darwinian Dynamics}, Cambridge University Press, Cambridge England, 2005.

\bibitem{weibull}
J.W. Weibull, \newblock {\em Evolutionary Game Theory}, Cambridge, MA: MIT Press, 1995.

\end{thebibliography}
\end{document}